\begin{document}

\title{The Stackelberg Kidney Exchange Problem is $\Sigma_2^p$-complete}
\author{B. Smeulders\inst{1}, D.A.M.P. Blom\inst{1}, F.C.R. Spieksma\inst{1} }
\date{\today}\institute{Department of Mathematics and Computer Science, Eindhoven University of Technology}

\maketitle
\begin{abstract}
We introduce the Stackelberg kidney exchange problem. In this problem, an agent (e.g. a hospital or a national organization) has control over a number of incompatible patient-donor pairs whose patients are in need of a transplant. The agent has the opportunity to join a collaborative effort which aims to increase the maximum total number of transplants that can be realized. However, the individual agent is only interested in maximizing the number of transplants within the set of patients under its control. Then, the question becomes which patients to submit to the collaborative effort. We show that, whenever we allow exchanges involving at most a fixed number $K \ge 3$ pairs, answering
this question is $\Sigma_2^p$-complete. However, when we restrict ourselves to pairwise exchanges only, the problem becomes solvable in polynomial time.
\end{abstract}
\keywords{Kidney exchange programmes, computational complexity, Stackelberg games}
\section{Introduction}

Kidney Exchange Programmes (KEPs) play a growing role in the improvement of lives of many patients suffering from end stage renal disease. Given the absence of artificial kidneys, given the lack of kidneys coming from deceased donors, and the fact that it is possible to lead a normal life with a single kidney, donation from living donors is an increasingly popular option. For this option to succeed, compatibility of donor and patient (in terms of blood type and immunological properties) is crucial.
Patients who have a willing, but incompatible, donor can be helped through a Kidney Exchange Program. In such a program, patient-donor couples, referred to as {\em pairs}, are present; the donor of a pair is willing to donate her/his kidney to some patient, provided that the corresponding patient receives a kidney from some donor. Such programmes have been established around the world; we refer to \cite{biroetal2019} for a recent overview of this practice in Europe.

The organization of collaboration between different transplant centres, organizations or countries (we use the term {\em agent} for an entity in control of a set of pairs) is a delicate matter. Such collaborations face many challenges, varying from legal considerations to the alignment of medical procedures. On one hand, it is clear that collaboration increases the possibilities for matching donors with patients, and thus leads to more transplants and better overall patient outcomes. On the other hand, these benefits of cooperation may be shared unequally, and in some cases individual agents may even lose transplants when combining patient pools.

\begin{figure}[ht!]
  \centering
  \includegraphics[width=75pt]{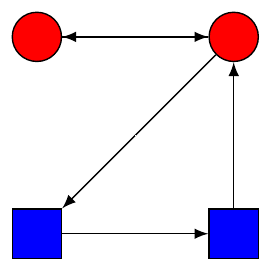}
  \caption{There exists no mechanism guaranteeing an individually rational social optimum}
  \label{Fig:Simple co-op}
\end{figure}

Consider as an illustration the example from \cite{ashlagi2014free}, depicted in Figure~\ref{Fig:Simple co-op}. Each node in this graph corresponds to a patient-donor couple, and an arc from one node to another means that the donor of the first node is compatible with the patient from the second node. A set of node-disjoint cycles in this graph corresponds to a set of realizable transplants. In Figure~\ref{Fig:Simple co-op}, two agents, red and blue, each have a private pool consisting of two pairs. Red can transplant two patients internally, while blue has no internal matches. However, if both agents combine their pools, three transplants are possible, two for blue patients and one for red. It is thus in red's best interest not to participate in the collaboration.

The example above illustrates an important result: no mechanism exists guaranteeing a solution that is both socially optimal (meaning delivering a maximum number of transplants) and individually rational (meaning that each agent acts solely in its own interest). Indeed, to give red an incentive to contribute any of its pairs, red must be guaranteed at least two transplants in the overall solution. However, any solution in which red receives two transplants is not socially optimal.

In case multiple agents can collaborate in a KEP, each agent is thus faced with the question whether to participate and if so, which of its pairs to submit to the common pool.

Issues surrounding collaboration in KEPs have been studied in the literature. \cite{ashlagi2014free} and \cite{toulis2015design} study cooperation in large random graphs, where only blood type compatibility is a limiting factor; they find that individually rational solutions exist that are close to socially optimal. 
\cite{blum2017opting} obtain similar results for arbitrary graphs. They show that with high probability, individually rational solutions are socially optimal. Additionally, they show that any socially optimal solution is with high probability close to individually rational.

Strategy-proof mechanisms for multi-agent kidney exchange are studied in \cite{ashlagi2015}. They show that even for two players and a maximum cycle length of 2, no strategy proof socially optimal mechanism can exist. They furthermore prove approximation bounds for deterministic and randomized mechanisms and propose a strategy-proof randomized mechanism that guarantees half of the optimal social welfare in the worst case. \cite{caragiannis2015improved} strengthen the bounds for randomized mechanisms and propose a strategy-proof randomized mechanism guaranteeing two-thirds of the maximum social welfare. For multi-period settings, \cite{hajaj2015} describe a strategy proof and socially optimal mechanism based on a credit system. However, this mechanism requires knowledge of the expected arrival rate of pairs for each agent. \cite{agarwal2018market} describe a credit system where agents are rewarded for adding pairs based on the expected marginal added transplants of adding that pair to a common pool.

\cite{carvalho2017} and \cite{carvalho2019game} study collaboration in KEPs as a non-cooperative game. They show that, when the cycle length equals 2, there exist a socially optimal Nash-equilibrium, and that this equilibrium can be computed in polynomial time. This problem is closely related to our setting, and we will elaborate on the similarities and differences with our setting in Section~\ref{sec:problem}.
In this paper, we study the problem faced by an individual agent in a collaborative KEP. How easy, or how hard, is it for an agent to determine its individual rational strategy? Specifically, we consider the situation where an agent must decide which of its pairs to match internally, and which pairs to add to a common pool. In this common pool, the number of transplants is then maximized. The agent's goal is to maximize the number of its own pairs that are transplanted. Clearly, this is a relevant problem. If an agent is not able to efficiently compute strategies that maximize its number of transplants, the design of mechanisms that guide collaboration in KEPs can be affected. For instance, such a mechanism can safely assume that agents are not able to efficiently identify such strategies.

In the next section, we formally define the problem and describe our main result. We will show that, even if an agent knows exactly which pairs of other agents are present in the common pool, together with their respective compatibilities, the problem of deciding which pairs to contribute and which to match internally, to guarantee a given number of its pairs are transplanted, is $\Sigma_2^p$-complete. The class $\Sigma_2^p$ is a complexity class of decision problems that generalizes the traditional classes P and NP to a setting with two decision makers (or players/agents). It contains problems that can be expressed by a logical formula using two consecutive quantifiers, where the first quantifier is of the type ``does there exist'', while the second quantifier is of the type ``for all''.
We refer to \cite{arorabarak} for an introduction into computational complexity including the polynomial hierarchy. One practical implication of a problem being $\Sigma_2^p$-complete is that the existence of a compact Integer Program modelling the problem is unlikely, see \cite{lodiralphswoe}. Hence, our result implies that is very hard for an agent who is solely interested in maximizing the number of transplants among its own patients, to determine which patient-donor pairs to submit to the common pool, and which not. Nevertheless, whenever we restrict ourselves to two-way exchanges only, we prove that the problem becomes polynomially solvable.

\section{The problem} \label{sec:problem}
We consider simple directed graphs $G= (V,A)$. A {\em cycle} in $G$ is a set of nodes $\{v_1, v_2, \ldots, v_q\}$ such that $v_i \in V$ for $i=1, \ldots, q$, $(v_i,v_{i+1}) \in A$ for $i=1, \ldots,q-1$, and $(v_q,v_1) \in A$; the {\em length} of the cycle is its number of nodes $q$.  A {\em $K$-cycle packing} in $G$ is a set of node-disjoint cycles each of which has length at most $K$; the {\em size} of a $K$-cycle packing is the total number of nodes contained in its cycles. We use the phrase {\em cycle packing} when the length of the cycles in the cycle packing is not specified. We say a node $v \in V$ is {\em covered} by a cycle packing, if that cycle packing contains a cycle which includes node $v$. When given subsets of nodes $U,W \subseteq V$, we use $w^U(G[W])$ to denote the minimum number of nodes in $U$ that are covered in a maximum size cycle packing in $G[W]$. In case $U = W$, $w^U(g[W])$ is just the maximum size of a cycle packing in $G[W]$. For ease of notation, we denote this by $w(G[W])$.

In our formulation of the problem, we distinguish between the {\em leader} on the one hand, and the {\em follower} on the other hand. Here, the leader stands for the individual agent (i.e., country or hospital), whereas the follower stands for the organisation responsible for running the larger Kidney Exchange Program.

Both the leader and the follower are each in control of a set of nodes (recall that each node refers to a patient-donor couple), denoted by $L$ and $F$ respectively. The leader has two options for each node. One option is to withold the node from the follower, and perhaps use the node, if possible, in a cycle packing that is private to the leader. The other option is to contribute the node, thereby adding the node to the global pool of nodes under control of the follower. In the latter option it is no longer in the control of the leader which cycle packing will be selected for nodes in the global pool, however, the follower will select a maximum size cycle packing in the global pool of nodes. The question is whether the leader can guarantee that a given number of its nodes will be covered by a cycle packing? We will refer to the node set $S$ chosen by the leader to withhold from the follower, as the leader's {\em strategy} $S$.

\begin{definition}[Stackelberg KEP game]
    We define a {\bf Stackelberg KEP game} as follows.

  {\bf Given:} A directed graph $G =(V=L \cup F,A)$ with $L \cap F = \emptyset$, and integers $k$ and $K$. 
  In the first phase, the leader selects a strategy $S \subseteq L$ of nodes, and calculates a maximum size $K$-cycle packing on $G[S]$. 
  In the second phase, the follower calculates a maximum size $K$-cycle packing on $G[V \setminus S]$. 

  {\bf Question:} Is there a subset $S \subseteq L$, such that $w(G[S]) + w^{L}(G[V \setminus S]) \geq k$?
\end{definition}

Given the definition of $w^L(G)$, it follows that by considering the quantity $w(G[S]) + w^{L}(G[V \setminus S])$, we are considering the worst-case scenario for the leader. Indeed, this assumes that among all possible maximum size packings the follower can choose, the follower chooses the one covering the minimum number of nodes of the leader. This objective is reasonable for a risk-averse agent when the follower's tie-breaking rules are unknown. \\

Stackelberg KEP is closely related to, yet different from, a problem considered by \cite{carvalho2017} and \cite{carvalho2019game}. They consider a game with maximum cycle length $K = 2$ and $N \ge 2$ players in which each player $i$ simultaneously chooses a (restricted) set of nodes $S^i$, for $1 \leq i \leq N$. Each player $i$ then computes a maximum size cycle packing on $G[S^i]$ ($1 \leq i \leq N$), and an independent agent (comparable to the follower in our setting) computes a maximum size cycle packing on $G[V \setminus \bigcup_{i = 1}^N S^i]$. The goal for every player $i$ is to maximize $w(G[S^i]) + w^{V^i}(G[V \setminus \bigcup_{i = 1}^N S^i])$, ($1 \leq i \leq N$); this problem is referred to as $N$-KEG. \cite{carvalho2017} give a polynomial time algorithm for finding a socially optimal Nash equilibrium in this game if the cycle length is limited to at most two.

Apart from the fact that \cite{carvalho2017} consider an arbitrary number of players, there are two main differences between Stackelberg KEP and $N$-KEG. First, in Stackelberg KEP there are no restrictions on the strategy $S$, whereas in $N$-KEG each node in $S_i$ must be covered by a cycle packing private to player $i$. Although one might find it reasonable to consider strategies where each node that is withheld from the follower is covered, there exist instances where imposing this condition leads to strategies that are dominated. We demonstrate this phenomenon in Figure~\ref{Fig:Holdback}, where it is beneficial for the leader to not contribute nodes that are unmatched internally. Second, in Stackelberg KEP, the follower is allowed to use any cycle in its solution, while in $N$-KEG cycles containing only nodes of a single player are prohibited.


\begin{figure}[ht!]
  \centering
  \includegraphics[width=300pt]{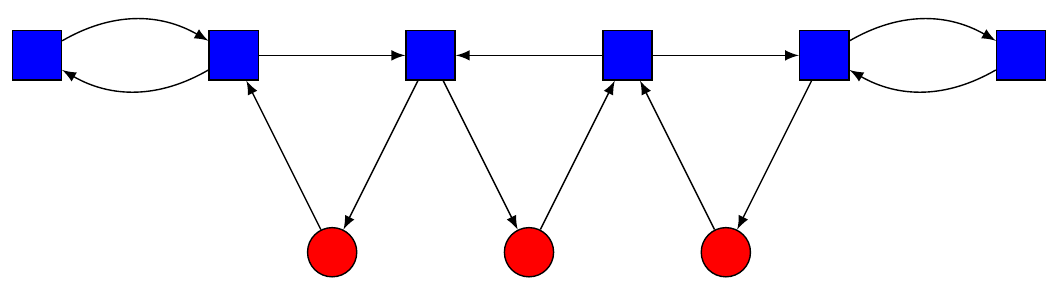}
  \caption{The leader (red circle) cannot choose any internal cycles and has incentive not to contribute its node in the middle to the global pool (wins one transplant within red patient pool).}
  \label{Fig:Holdback}
\end{figure}

Due to the close relationship between the two problems, our results for the Stackelberg game will shed light on Carvalho et al's. limitation to a maximum cycle length of 2.

\section{The result}
\label{sec:result}
We claim that Stackelberg KEP is $\Sigma_2^p$-complete. In this section, we prove this through a reduction from Adversarial (2,2)-SAT, which is defined as follows.
\begin{definition}[Adversarial (2,2)-SAT]

\noindent
{\bf Given:} Sets $X$ and $Y$ of variables, and a Boolean expression $E$ in conjunctive normal form, consisting of a set of clauses $C$, over $X$ and $Y$. Each variable occurs exactly four times in $E$: two times in negated form and two times in unnegated form. \\
\noindent
  {\bf Question:} Does there exist a truth assignment for $X$ such that there does not exist a truth assignment for $Y$ satisfying $E$?
\end{definition}

The $\Sigma_2^p$-completeness of Adversarial (2,2)-SAT follows from Theorem 2.1.1 in \cite{johannes} and through a specific reduction from 3-SAT to (2,2)-SAT given in the appendix.

We are now in a position to prove our main result.

\begin{theorem}
Stackelberg KEP is a $\Sigma_2^p$-complete problem, for each fixed $K \geq 3$.
\end{theorem}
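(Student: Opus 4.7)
The plan is to prove the theorem by showing membership in $\Sigma_2^p$ and $\Sigma_2^p$-hardness separately. For membership, I would rewrite the decision question using two alternating quantifiers over polynomial-size witnesses. The existential block guesses the leader's strategy $S \subseteq L$, a cycle packing $P$ in $G[S]$, a claimed follower-optimum value $m$, and a cycle packing $Q^*$ in $G[V \setminus S]$ of size $m$. The universal block ranges over cycle packings $Q$ in $G[V \setminus S]$ and verifies both that $|Q| \le m$ (so $m$ really is the maximum packing size in $G[V \setminus S]$) and that, whenever $|Q| = m$, the number of nodes of $L$ covered by $Q$ is at least $k - |P|$. Since any such $P$ witnesses $w(G[S]) \ge |P|$, the existence of a tuple $(S, P, m, Q^*)$ passing all checks is equivalent to the Stackelberg condition, and every check runs in polynomial time.

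For hardness, I would reduce from Adversarial $(2,2)$-SAT. Given variables $X$, $Y$, clause set $C$, and a fixed $K \ge 3$, I would construct a graph $G = (L \cup F, A)$ together with a threshold $k$ such that the leader has a winning strategy if and only if there exists an $X$-assignment making $E$ unsatisfiable as a formula in $Y$. The reduction would use three gadget families: a leader variable gadget for each $x \in X$, in which the leader is effectively forced to choose between two strategies corresponding to the truth values of $x$, encoded by which of two groups of literal-nodes the leader withholds from the follower and covers in a private short cycle; a follower variable gadget for each $y \in Y$, in which any maximum $K$-cycle packing of the follower must commit to one of two alternative local coverings representing the truth value of $y$; and a clause gadget for each $c \in C$, with nodes that can be absorbed into a maximum packing of the follower precisely when at least one literal in $c$ is satisfied by the combined assignment.

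The heart of the reduction, and the step I expect to be the main obstacle, is a careful counting argument that calibrates the gadget sizes and the threshold $k$ so that three properties hold simultaneously. First, the follower's maximum packing size must be independent of which $Y$-assignment is played, so the follower cannot evade the logical structure by sacrificing a few cycles. Second, whenever the chosen $Y$-assignment satisfies $E$, the follower must have a maximum packing covering strictly fewer than $k - |P|$ leader nodes, so the leader loses. Third, whenever no $Y$-assignment satisfies $E$, every maximum packing of the follower must cover at least $k - |P|$ leader nodes, so the leader wins. With $K = 3$ the gadgets can be realized using triangles joining literal-nodes to clause-nodes, and for larger $K$ each gadget can be padded with cycles of length up to $K$ whose contribution cancels on both sides of the inequality. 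The verification then reduces to showing that the gadgets interact only through their literal-nodes, so that $w(G[S]) + w^L(G[V \setminus S])$ decomposes additively over gadgets and the analysis boils down to a gadget-by-gadget check.
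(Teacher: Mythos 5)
Your membership argument (guess $S$, a private packing $P$, a value $m$ and a witness packing $Q^*$ existentially; quantify universally over follower packings $Q$ to certify both optimality of $m$ and that every optimum covers at least $k-|P|$ leader nodes) is correct, and your choice of source problem and overall architecture for the hardness part — Adversarial $(2,2)$-SAT, leader-controlled gadgets for $X$, follower-only gadgets for $Y$, clause nodes absorbed into the follower's optimum exactly when satisfied — is the same as the paper's. But what you have written is a plan, not a proof, and the part you defer (``the careful counting argument'') is precisely where all of the content lies. Two specific claims in your plan would fail as stated. First, the assertion that $w(G[S]) + w^L(G[V\setminus S])$ ``decomposes additively over gadgets'' so that the verification is a gadget-by-gadget check is not tenable: the clause nodes are shared between variable gadgets, and the mechanism by which the leader detects unsatisfiability must be a \emph{global} one (in the paper, a single extra leader node $d$ adjacent to every clause node, which is left uncovered exactly when every clause node can be paired with a literal node). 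A local, per-gadget accounting cannot express ``some clause is unsatisfied,'' and the paper's corresponding argument (its Lemma 2, Case 2) is genuinely global: it reconstructs a satisfying $Y$-assignment from any maximum packing that avoids $d$. Second, the follower is not obliged to play a packing that encodes a $Y$-assignment at all; it may locally sacrifice coverage inside a $Y$-gadget to free literal nodes for two clause cycles at once (the paper's ``cheating'' and ``zigzag'' configurations). Your property that ``the follower's maximum packing size is independent of the $Y$-assignment'' does not address these non-assignment packings; one must prove an exchange lemma showing every maximum packing is of one of finitely many local types and that the deviant types never let the follower avoid the detector node. This is the paper's Lemma 1 plus the first case of its Lemma 2, and nothing in your sketch substitutes for it.

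A symmetric gap sits on the leader's side. You say the leader gadget ``effectively forces'' a choice between two withheld node groups encoding the truth value of $x$, but in any natural construction the leader may also withhold both groups, neither group, or a single node from a group, and the reverse direction of the reduction must show such strategies are dominated by truth-value (``nice'') strategies before a winning strategy can be read off as an $X$-assignment; this is a separate exchange argument (the paper's Lemma 3). Finally, your padding idea for $K>3$ needs care: enlarging $K$ does not merely add cycles whose ``contribution cancels,'' it creates \emph{new} long cycles weaving through several gadgets and clause nodes, and one must argue these cannot appear in (or cannot improve) a maximum packing. Until the gadgets are written down explicitly and these three exchange/structure lemmas are proved, the equivalence between the leader winning and the $\exists X \forall Y$ condition is not established.
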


\begin{proof}
Given an instance of Adversarial (2,2)-SAT, we construct an instance $G=(L \cup F,A)$ of Stackelberg KEP. To clearly distinguish between nodes belonging to the leader (i.e., the set $L$) and nodes belonging to the follower (i.e., the set $F$), we use Latin letters to denote the former, and Greek letters to denote the latter.

For each variable $x \in X$, we construct a gadget as depicted in Figure \ref{Fig:VarGadget}. The gadget contains four leader nodes, $\{t_{x,1},t_{x,2},f_{x,1},f_{x,2}\} \in L$ and six follower nodes, $\{\alpha_{x,1}, \alpha_{x,2}, \beta_{x,t,1}, \beta_{x,t,2},\beta_{x,f,1}, \beta_{x,t,2}\} \in F$. We denote the arcs between nodes in the gadget corresponding to $x \in X$, by the set $A_x$:
\begin{eqnarray*}
A_x &= \{(\alpha_{x,i}, \beta_{x,t,i}), (\beta_{x,t,i},t_{x,i}), (t_{x,i}, \alpha_{x,i})|~i=1,2\} \\ &\cup \{(\alpha_{x,i}, \beta_{x,f,i}), (\beta_{x,f,i},f_{x,i}), (f_{x,i}, \alpha_{x,i})|~i=1,2\} \\ &\cup \{(t_{x,1}, t_{x,2}), (t_{x,2}, t_{x,1}), (f_{x,1}, f_{x,2}), (f_{x,2}, f_{x,1})\}.
\end{eqnarray*}
For each variable $y \in Y$, we have an identical gadget, except that all its ten nodes are in $F$; to follow our naming convention, instead of nodes $t_{x,i}$ and $f_{x,i}$, the corresponding nodes in this gadget are called $\tau_{y,i}$ and $\phi_{y,i}$, $i=1,2$. The set of arcs between nodes of a gadget corresponding to $y \in Y$ is denoted by $A_y$.

For reasons of convenience, we define the set of so-called $\beta$ nodes as:
\begin{equation}
\nonumber
    B \equiv \{ \{\beta_{x,t,i}, \beta_{x,f,i}\} |~x \in X, i=1,2\} \cup \{\{\beta_{y,t,i}, \beta_{y,f,i}\}|~ y \in Y, i=1,2\}.
\end{equation}

\begin{figure}[ht!]
  \centering
  \includegraphics[width=250pt]{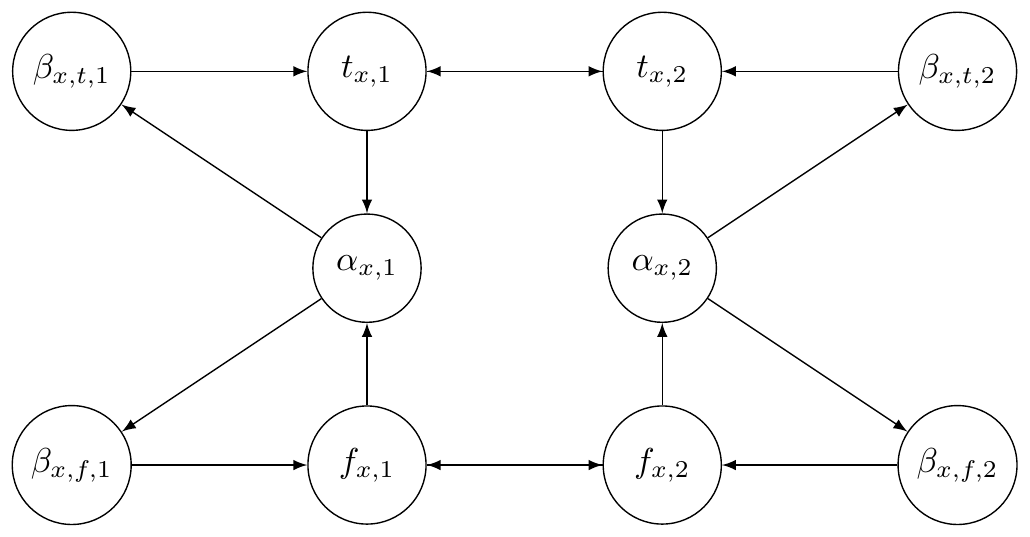}
  \caption{The gadget corresponding to a variable $x \in X$}\label{Fig:VarGadget}
\end{figure}

The construction per clause is relatively simple. For each clause $c \in C$, there exists one node, called the {\em clause node}, $\delta_c \in F$. Additionally, there is one node $d \in L$ in total. We have now specified the node sets $L$ and $F$; notice that $|L|=4|X|+1$, and $|F| = 6|X|+10|Y|+|C|.$

For each clause $c \in C$, there exist arcs $(\delta_c, d), (d,\delta_c)$ in $A$. The clause nodes are connected to the nodes in the variable gadgets as follows. For each variable $x \in X$ ($y \in Y$), there are arcs $(\beta_{x,t,1}, \delta_{c})$ ($(\beta_{y,t,1}, \delta_{c})$) and $(\delta_{c},\beta_{x,t,1})$ ($(\delta_{c},\beta_{y,t,1})$) whenever $c$ is the first clause in which $x$ ($y$) occurs unnegated, with respect to a lexicographical ordering of the clause set $C$. Analogously, the node $\beta_{x,t,2}$ ($\beta_{y,t,2}$) is connected to the clause node that corresponds to the second clause in which $x$ ($y$) occurs unnegated. Similarly, there are arcs $(\beta_{x,f,i}, \delta_c)$ and $(\beta_{y,f,i}, \delta_c)$  which connect $\beta_{x,f,i}$ and $\beta_{y,f,i}$ respectively, to the clause node of the $i$-th clause where $x$ ($y$) occurs negated, $i=1,2$.

Summarizing the construction, we have specified the graph $G=(L \cup F,A)$ by choosing:
\begin{eqnarray*}
L & =& \{\{t_{x,i}, f_{x,i}\}|~x \in X, i=1,2\} \cup \{d\},\\
F &=& \{\{\alpha_{x,i}, \alpha_{y,i}\}|~x \in X, y \in Y, i=1,2\} \cup B \cup \{\{\tau_{y,i}, \phi_{y,i}\}|~y \in Y, i=1,2\} \cup \{\delta_c|~c \in C\},\\
A &=& \cup_{x \in X} A_x \bigcup \cup_{y \in Y} A_y \bigcup \{(\delta_c,d), (d, \delta_c)|~c \in C\} \\
&&\cup~ \{(\beta_{x,t,i}, \delta_c), (\delta_c, \beta_{x,t,i})|~c \in C \mbox{ is the }i\mbox{-th clause containing } x,\ x \in X, i=1,2\}
\\
&&\cup~ \{(\beta_{x,f,i}, \delta_c), (\delta_c,\beta_{x,f,i}) |~c \in C \mbox{ is the }i\mbox{-th clause containing } \neg x,\ x \in X, i=1,2\}\\
&&\cup~ \{(\beta_{y,t,i}, \delta_c), (\delta_c,\beta_{y,t,i}) |~c \in C \mbox{ is the }i\mbox{-th clause containing } y,\ y \in Y, i=1,2\}
\\
&&\cup~ \{(\beta_{y,f,i}, \delta_c), (\delta_c, \beta_{y,f,i})|~c \in C \mbox{ is the }i\mbox{-th clause containing } \neg y,\ y \in Y, i=1,2\}.
\end{eqnarray*}

We set $K=3$, meaning that the length of a cycle present in a solution cannot exceed 3. Finally, we set $k = 4|X|+1$. This completes the description of an instance of Stackelberg KEP.\\

$\Rightarrow$ Given a truth assignment for $X$ such that no truth assignment exists for $Y$ that satisfies $E$, we now show the existence of a strategy $S \subseteq L$ such that $w(G[S]) + w^{L}(G[V \setminus S]) \geq 4|X|+1 = k$.

Given a truth assignment for $X$, we propose the following strategy $S$:
\begin{equation}
\label{eq:defS}
    S = \{\{t_{x,1}, t_{x,2}\}|~ x \in X \mbox{ is true}\} \cup \{\{f_{x,1}, f_{x,2}\}|~ x \in X \mbox{ is false}\}.
\end{equation}

In words: for each variable $x \in X$ that is TRUE, $t_{x,i} \in S$ for $i = 1,2$ and for each variable $x \in X$ that is FALSE, $f_{x,i} \in S$ for $i = 1,2$. 

Recall that $|L|=4|X|+1$. Hence, we need to show that given this strategy $S$, each node of the leader is contained in a maximum size $3$-cycle packing of the leader (i.e., a maximum size $3$-cycle packing on $G[S]$), or in each maximum size $3$-cycle packing of the follower (i.e., a maximum $3$-cycle packing on $G[V \setminus S]$). By the choice of $S$, there are $2|X|$ nodes in $S$, all of which are contained in a maximum size $3$-cycle packing of the leader. Indeed, such a maximum size $3$-cycle packing of the leader consists of $|X|$ cycles of length 2, each containing a pair of $t$-nodes or a pair of $f$-nodes of the corresponding variable gadget. It remains to show that every maximum size $3$-cycle packing for $G[V \setminus S]$ contains all $t$ and $f$-nodes not in $S$, as well as the $d$-node.

To do so, we now analyze the possible 3-cycle packings in $G[V\setminus S]$. Notice that any cycle in $G[V\setminus S]$ that contains nodes of different variable gadgets has length more than 3, and hence cannot be present in a $3$-cycle packing. It follows that each cycle in the follower's 3-cycle packing
\begin{itemize}
    \item consists of nodes all within a single variable gadget (called a cycle of Type 1), or
    \item consists of the nodes  $\{\beta, \delta_c\}$ for some $\beta \in B, c \in C$ (called a cycle of Type 2), or
    \item consists of the nodes $\{\delta_c, d\}$ for some $c \in C$ (called a cycle of Type 3).
\end{itemize}

We now classify the variable gadgets with respect to the possible cycles of Type 1 contained in the variable gadget. The classification of these gadgets is illustrated in Figure \ref{fig:followerpackings}.

\begin{definition}
Given a solution to the follower's cycle packing problem, we call a gadget corresponding to variable $x \in X$
\begin{itemize}
\item {\em consistent} if either the two node-sets $\{\alpha_{x,i}, \beta_{x,f,i}, f_{x,i}\}$, $i=1,2$, or the two node-sets $\{\alpha_{x,i}, \beta_{x,t,i}, t_{x,i}\}$, $i=1,2$ each correspond to a cycle in the solution (Fig. \ref{Fig:Xconsistent}),
\item {\em cheating} if the node-set $\{f_{x,1}, f_{x,2}\}$ or $\{t_{x,1}, t_{x,2}\}$ corresponds to a cycle in the solution (Fig. \ref{Fig:XCheat}).
\end{itemize}

Given a solution to the follower's cycle packing problem, we call a gadget corresponding to variable $y \in Y$
\begin{itemize}
\item {\em consistent} if either the node-sets $\{\alpha_{y,i}, \beta_{y,f,i}, \phi_{y,i}\}$, $i=1,2$ as well as the node-set $\{\tau_{y,1}, \tau_{y,2}\}$ each correspond to a cycle in the solution, or if the node-sets $\{\alpha_{y,i}, \beta_{y,t,i}, \tau_{y,i}\}$, $i=1,2$ as well as the node-set $\{\phi_{y,1}, \phi_{y,2}\}$ each correspond to a cycle in the solution (Fig. \ref{Fig:Yconsistent}),
\item {\em cheating} if the two node-sets $\{\tau_{y,1}, \tau_{y,2}\}$ and $\{\phi_{y,1}, \phi_{y,2}\}$ each correspond to a cycle in the solution (Fig. \ref{Fig:YCheat}),
\item {\em zigzag} if the node-sets $\{\alpha_{y,1}, \beta_{y,t,1}, \tau_{y,1}\}$ and $\{\alpha_{y,2}, \beta_{y,f,2}, \phi_{y,2}\}$ each correspond to a cycle in the solution (Fig. \ref{Fig:Zigzag}).
\end{itemize}
\end{definition}

Consistent gadgets will be used to reflect the truth value of the corresponding variables. We say the gadget is consistent with a TRUE value if the two node-sets $\{\alpha_{x,i}, \beta_{x,f,i}, f_{x,i}\}$ for $i = 1,2$ each correspond to a cycle in the follower's cycle packing solution. Analogously, if in the follower's solution the node-sets $\{\alpha_{x,i}, \beta_{x,t,i}, t_{x,i}\}$ for $i = 1,2$, each correspond to a cycle, we say the gadget is consistent with FALSE.\\


\begin{figure*}[t!]
    \centering
    \begin{subfigure}[t]{0.45\textwidth}
        \centering
        \includegraphics[width = 130pt]{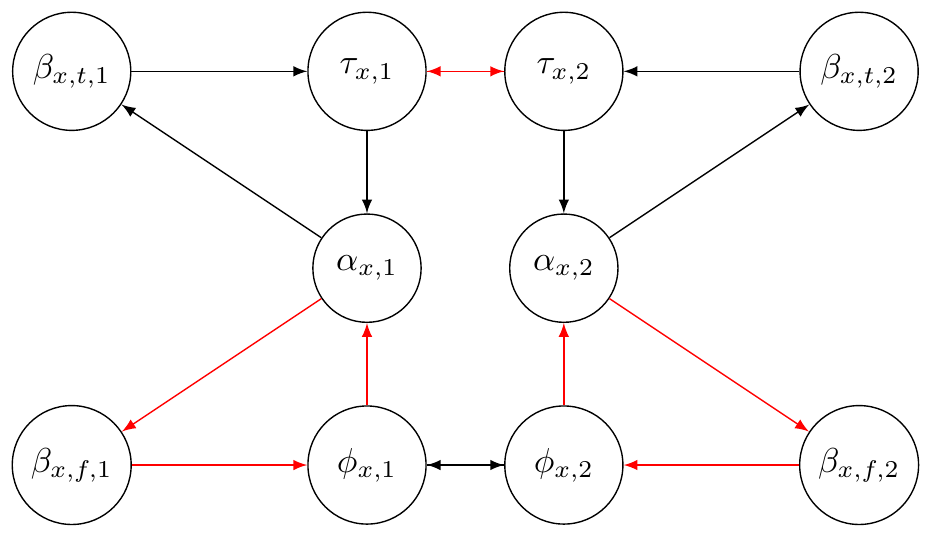}
        \caption{Consistent gadget $(y \in Y)$}\label{Fig:Yconsistent}
    \end{subfigure}%
    ~
    \begin{subfigure}[t]{0.45\textwidth}
        \centering
        \includegraphics[width = 130pt]{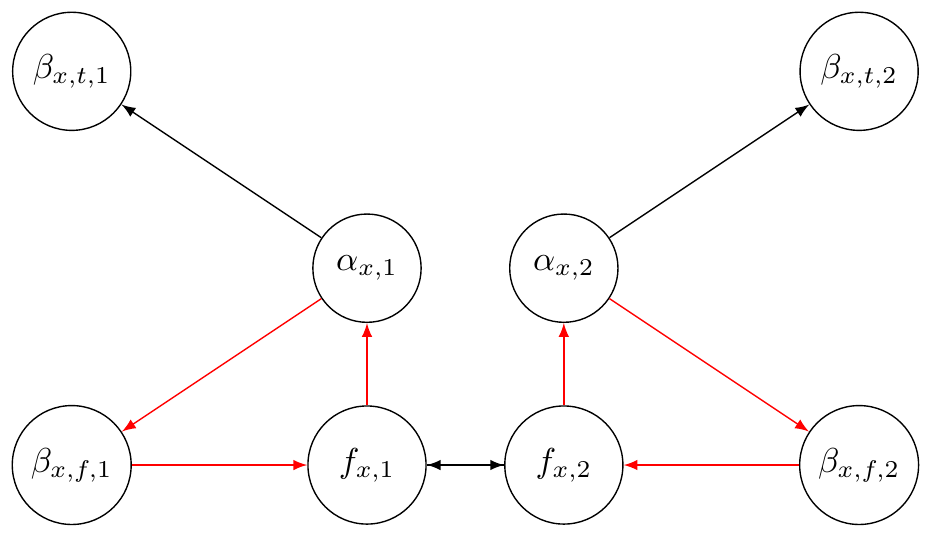}
        \caption{Consistent gadget $(x \in X)$} \label{Fig:Xconsistent}
    \end{subfigure}
    \vskip \baselineskip
    \begin{subfigure}[t]{0.45\textwidth}
        \centering
        \includegraphics[width = 130pt]{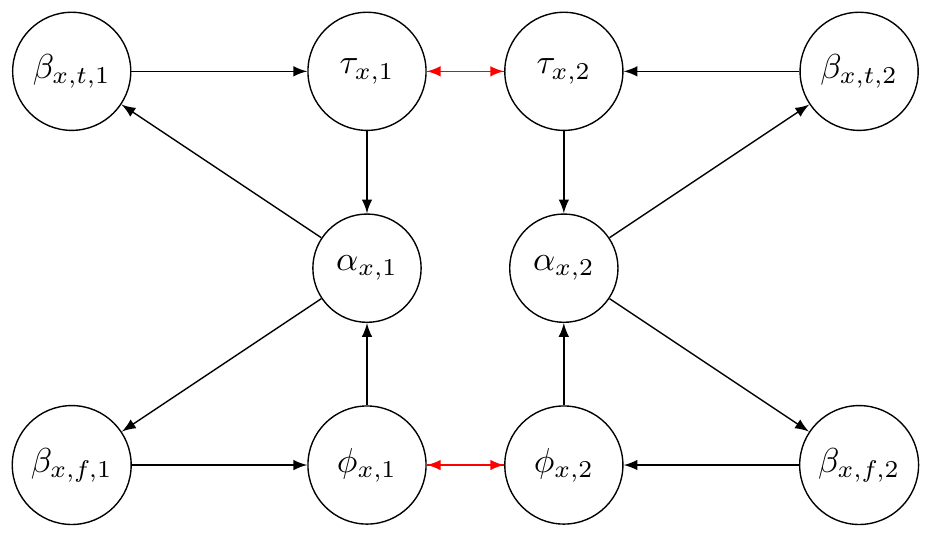}
        \caption{Cheating gadget $(y \in Y)$} \label{Fig:YCheat}
    \end{subfigure}
    ~
    \begin{subfigure}[t]{0.45\textwidth}
        \centering
        \includegraphics[width = 130pt]{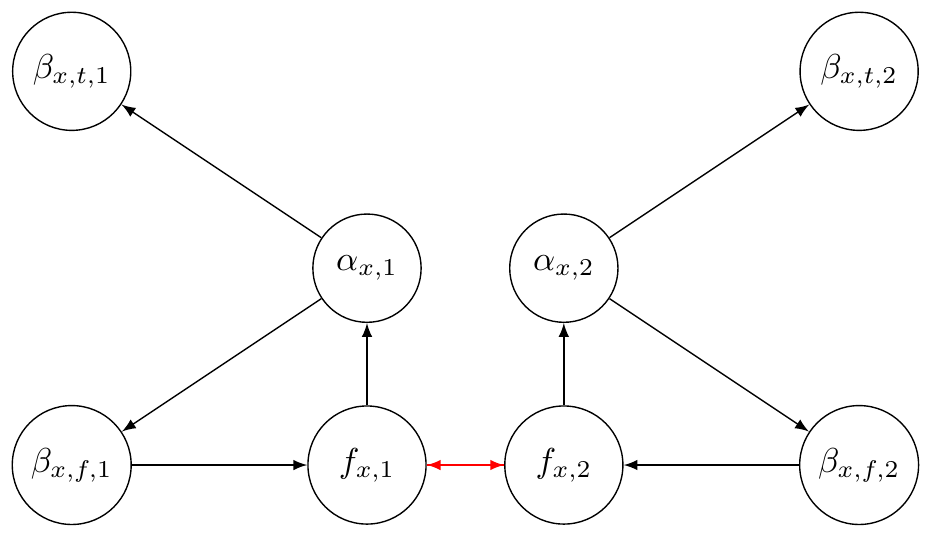}
        \caption{Cheating gadget $(x \in X)$} \label{Fig:XCheat}
    \end{subfigure}
    \vskip \baselineskip
    \begin{subfigure}[t]{0.45\textwidth}
        \centering
        \includegraphics[width = 130pt]{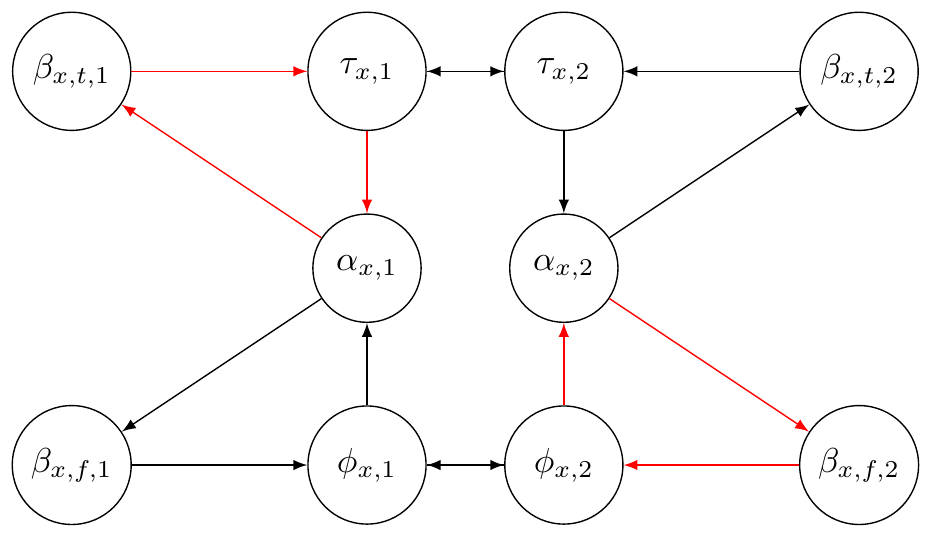}
        \caption{Zigzag gadget $(y \in Y)$} \label{Fig:Zigzag}
    \end{subfigure}
    \caption{Classification of cycle packings on variable gadgets}
    \label{fig:followerpackings}
\end{figure*}

We use these definitions to characterize an optimal 3-cycle packing of the follower, as witnessed by the following lemma.

\noindent
\begin{lemma}
\label{lem:classi}
Let $S$ be defined by (\ref{eq:defS}). In each optimal 3-cycle packing on $G[V \setminus S]$, each variable gadget is either consistent, cheating or zigzag.
\end{lemma}
\begin{proof} We argue as follows. Consider a feasible solution to the follower's cycle packing problem such that there is a variable gadget which is neither consistent, nor cheating, nor zigzag. We show that that solution is not of maximum size by exhibiting the existence of a strictly better solution.

We first consider variable gadgets corresponding to variables $x \in X$. Without loss of generality, we assume that $t_{x,1}, t_{x,2} \in S$, as depicted in Figures \ref{Fig:Xconsistent} and \ref{Fig:XCheat}. There are two cases.

\begin{enumerate}
    \item[Case 1:] Consider a solution such that for some variable gadget corresponding to $x \in X$, at most one of $\{\beta_{x,f,1}, \beta_{x,f,2}\}$ is covered by a cycle of Type 2. Since, by assumption, the gadget is not consistent, at most three of its nodes are covered by a cycle of Type 1.  By removing at most one cycle of Type 2 (thereby ``freeing'' a $\beta$ node), and introducing a cycle of Type 1, we have changed the state of this gadget to consistent. Moreover, the size of the packing has increased by at least $-2+3 = 1$.
    \item[Case 2:] Consider a solution such that for some variable gadget corresponding to $x \in X$, both $\beta_{x,f,i}$, $i = 1,2$ are covered by cycles of Type 2. Since, by assumption, the gadget is not cheating, it follows that no other nodes of the gadget are contained in a cycle. By simply adding the cycle consisting of the nodes $\{f_{x,1}, f_{x,2}\}$, the size of the packing increases by 2.
\end{enumerate}

Due to space restrictions, we include the remainder of the proof, for variable gadgets corresponding to variables $y \in Y$, in the appendix. \qed
\end{proof}

Thus, we have proven that in any optimum 3-cycle packing of the follower, each gadget is either consistent, cheating or zigzag. Given this structure of any optimal solution of the follower, we will now argue that the leader's strategy $S$ ensures that all $4|X|+1$ leader nodes will be covered (if the truth assignment on $X$ is such that there exists no truth assignment on $Y$ satisfying $E$). Observe that, given the two possible packings on variable gadgets corresponding to variables $x \in X$, all leader nodes within the variable gadgets are covered. Thus, we have covered already $4|X|$ nodes of the leader. We must now show that the $d$ node will also be covered in any maximum size packing of the follower.

By Lemma~\ref{lem:classi}, we know that an optimal solution of the follower either contains a variable gadget that is cheating or zigzag, or all variable gadgets are consistent. We proceed to argue that in each of these two cases, any optimal solution of the follower covers node $d$.

\begin{lemma}
\label{lem:noded}
Let $S$ be defined by (\ref{eq:defS}). In each optimal 3-cycle packing on $G[V \setminus S]$, node $d$ is covered.
\end{lemma}

\begin{proof}
We distinguish two cases. In both cases, we argue by contradiction, i.e., we argue that if vertex $d$ is not covered in a 3-cycle packing, that 3-cycle packing is not of maximum size. 

\begin{enumerate}
    \item [Case 1:] The follower's solution uses a gadget that is either cheating or zigzag. Let us further suppose that this follower's solution is such that node $d$ is not covered.
\begin{itemize}
    \item Cheating gadget: the cheating gadget covers 4 nodes by cycles of Type 1, compared to 8 nodes in a consistent gadget. Switching to a consistent packing is thus always strictly better unless the cheating packing allows for 2  additional cycles of Type 2. This is only the case if all four $\beta$-nodes are covered by such cycles in the cheating gadget. Indeed, if less than four are covered, the packing can be set consistent to the truth value breaking a minimum cycles of Type 2. However, even if all four $\beta$-nodes are covered by cycles of Type 2, this still only leads to parity between the consistent and cheating gadget (since they both cover 12 nodes). Thus, if the $d$ node is uncovered, one of the cycles of Type 2 can be replaced by the cycle $(\delta_c, d)$ of Type 3, and the consistent packing achieves 14 nodes over the variable and linked clause gadgets.
    \item Zigzag gadget: by an analogous argument, it can be shown that if $d$ is uncovered, switching from a zigzag to a consistent packing increases the number of covered nodes by two.
\end{itemize}
    \item[Case 2:] The follower's solution uses only consistent gadgets. Let us further suppose that this follower's solution is such that node $d$ is not covered. Recall that the truth assignment on $X$ used to build the strategy $S$ is such that there does not exist any truth assignment on $Y$ satisfying $E$.

    Since node $d$ is not covered, it follows immediately that in this cycle packing, each $\delta_c$ is covered by a cycle of Type 2. If this were not the case, the cycle $(\delta_c, d)$ would strictly increase the size of the packing. Now let us build a truth assignment for the (2,2)-SAT instance based on the packing. If the packing restricted to a variable gadget is consistent with TRUE (FALSE), set that variable to TRUE (FALSE). Clearly, for each variable $x \in X$ this truth assignment is the same as the original truth assignment used to construct the strategy $S$. We claim that this truth assignment satisfies each clause. Indeed, for a given clause $c$ let there be (wlog) the cycle $(\delta_c, \beta_{y,t,i})$. By construction of the Stackelberg KEP instance these arcs only exist if the clause is satisfied by a value of TRUE for $y$. Furthermore, by construction of the truth assignment, we have set $y$ to true. Thus, the truth assignment is such that every clause is satisfied. Thus, we have arrived at a contradiction.
\end{enumerate}
Since Lemma~\ref{lem:classi} implies there are no other cases, the result follows. \qed
\end{proof}

Summarizing, if there exists a truth assignment to $X$ such that there does not exist a truth assignment to $Y$ satisfying $E$, the leader can construct a strategy $S$. This strategy is such that if the follower's solution uses only consistent gadgets, there is at least one $\delta_c$-node that can not be covered by a cycle of Type 2, and will thus be covered by a cycle with node $d$, i.e., a cycle of Type 3. Alternatively, if the follower's solution contains a cheating or a zigzag gadget then it must also be the case that node $d$ is covered. The leader is thus guaranteed that the strategy $S$ implies that all its $4|X|+1$ nodes are covered.\\[2ex]



\noindent
$\Leftarrow$ Suppose that there exists a strategy $S$ such that the leader can guarantee that all its $4|X|+1$ nodes are covered. We will show that the existence of such a strategy implies that there exists a truth assignment for $X$ such that there is no truth assignment for $Y$ satisfying the expression $E$.

We first analyze the structure of any strategy $S$ that guarantees covering all $4|X|+1$ nodes of the leader; we use $\mathcal{S}$ to denote the collection of strategies that guarantee that all nodes of the leader are covered.


Obviously, for each $S \in \mathcal{S}$ it must hold that either both $t_{x,1}$ and $t_{x,2}$, or none of $t_{x,1}$ and $t_{x,2}$ are in $S$ for each $x \in X$. Indeed, if $S$ contains exactly one node from $\{t_{x,1}, t_{x,2}\}$ for some gadget corresponding to $x \in X$, it is impossible to cover that node of the leader. Thus, $\mathcal{S}$ contains only strategies $S$ for which either both $t_{x,1}$ and $t_{x,2}$, or none of $t_{x,1}$ and $t_{x,2}$ are in $S$ for each $x \in X$. The same statement holds for the nodes $f_{x,1}$ and $f_{x,2}$, for some $x \in X$: either both nodes $f_{x,1}, f_{x,2}$ are in $S$ or none of them, for each $x \in X$.

Further, we call a strategy $S$ {\em nice} if, for each $x \in X$, either $t_{x,1}, t_{x,2} \in S$, or $f_{x,1}, f_{x,2} \in S$ but not both.

The following lemma describes the presence of this property in optimal strategies.

\begin{lemma}
\label{lem:Sisnice}
There exists an optimal strategy $S \in \mathcal{S}$ that is nice.
\end{lemma}
\begin{proof}
We argue by contradiction. Thus, suppose that each $S \in \mathcal{S}$ is not nice. It follows that there exists a set of gadgets corresponding to $W \equiv W_1 \cup W_2 \subseteq X$ such that
\begin{enumerate}
    \item $W_1 := \{x \in X \mid  t_{x,1}, t_{x,2}, f_{x,1}, f_{x,2} \notin S\}$, and
    \item $W_2 := \{x \in X \mid t_{x,1}, t_{x,2}, f_{x,1}, f_{x,2} \in S\}$.
\end{enumerate}

We now prove that, given some $S \in \mathcal{S}$, we can construct a strategy $S' \in \mathcal{S}$ which is nice, thereby proving the lemma. Indeed, given some $S \in \mathcal{S}$, let strategy $S'$ identical to $S$ except that, for each $x \in W$, we set $t_{x,1}, t_{x,2} \in S'$ and $f_{x,1}, f_{x,2} \notin S'$. Clearly, it follows that $S'$ is nice. We proceed to argue that all $4|X|+1$ nodes of the leader are guaranteed to be covered by $S'$, i.e., $S' \in \mathcal{S}$.

We argue by contradiction. Suppose this is not the case, i.e., the constructed nice strategy $S'$ cannot guarantee that all $4|X|+1$ leader nodes are covered. Then there exists a maximum size cycle packing on $G[V \setminus S']$ for the follower which does not cover $d$. The strategy $S'$ being nice, and cycle packing on $G[V \setminus S']$ being of maximum size while not covering $d$, implies the following:
\begin{enumerate}
    \item Each clause node $\delta_c$ ($c \in C$) is in a cycle of Type 2 with a $\beta$-node, and
    \item each gadget is consistent; this follows from Lemma~\ref{lem:classi}, and the proof of Lemma~\ref{lem:noded}, case 1.
\end{enumerate}

Given the cycle packing on $G[V \setminus S']$, we now construct a cycle packing on $G[V \setminus S]$. First, these packings are identical with respect to the cycles of Type 2, and the cycles of Type 1 in the gadgets corresponding to $y \in Y$ and $x \in X \setminus W$. Note that these gadgets are all consistent. Next, for gadgets corresponding to a variable $x \in W_1$, choose cycles such that the gadget is consistent with True.  For the gadgets of variables $x \in W_2$, the follower can not choose any cycle of Type 1 in the gadget.

The size of the cycle packing on $G[V \setminus S]$ as just described is $2|C| + 8|Y| + 8|W^1| + 6|X \setminus W|$. This is also an upper bound on the size of any maximum cycle packing on $G[V \setminus S]$. Indeed, since each cycle of Type 2 or Type 3 has a length of 2 and covers one node $\delta_c$, $c \in C$, their combined size is at most $2|C|$. The size of cycles of Type 1 per gadget is similarly bounded by $8$ for gadgets corresponding to $y \in Y$ and $x \in W^1$, $6$ for gadgets corresponding to $x \in X \setminus W$, and no cycles can be chosen in gadget corresponding to $x \in W^2$. Since the size of the constructed matching on $G[V \setminus S]$ matches its upper bound, it is a maximum matching. Given a maximum size cycle packing on $G[V \setminus S']$ that does not cover $d$, we can thus construct a maximum size cycle packing on $G[V \setminus S]$, a contradiction.\qed
\end{proof}

By Lemma~\ref{lem:Sisnice}, we are ensured that there is a nice strategy $S$ within the class $\mathcal{S}$. Given a nice strategy $S$, we formulate the solution to Adversarial (2,2)-SAT accordingly: if $t_{x,1}, t_{x,2} \in S$ and $f_{x,1}, f_{x,2} \notin S$, set $x \in X$ to TRUE, and conversely, if $t_{x,1}, t_{x,2} \notin S$ and $f_{x,1}, f_{x,2} \in S$, set $x \in X$ to FALSE.
We claim that  this truth assignment for $X$ is such that there does not exist a truth assignment for $Y$ satisfying $E$.
Indeed, if a truth assignment existed for $Y$ satisfying $E$ there would exist a solution to the follower's problem consisting of consistent gadgets only, reflecting the truth assignment for $Y$, such that each $\delta_c$ is covered by a cycle with a $\beta$-node ($c \in C, \beta \in B$), leaving $d$ uncovered.
This finishes the proof.\qed
\end{proof}

\section{The Stackelberg KEP game with $K = 2$}
In this section, we consider Stackelberg KEP for $K=2$. We will show that the Stackelberg KEP game for $K = 2$ is polynomially solvable, i.e., we can compute an optimal strategy $S$ for the leader.

The proof of this claim heavily relies on the results by \cite{carvalho2017} and \cite{carvalho2019game}. We will show that the leader's optimal strategy can be determined by solving the problem of computing a player's best reaction in an $N$-KEG game, whenever the strategies of the other $N-1$ players are considered fixed. Note that in an $N$-KEG game, the players are restricted to play a strategy in which they contribute all internally unmatched pairs to the common pool. This stands in contrast to the setting of the Stackelberg KEP game, where the leader is allowed to withhold unmatched nodes from the follower.

In the following lemma, we show that contributing an extra node to the common pool never decreases the minimum number of leader nodes matched in a maximum size matching. As a result, strategies where the leader does not contribute one or more nodes not covered by the internal packing is (weakly) dominated by the strategy with an identical internal packing where the leader contributes all nodes that are not covered. Thus, there always exists an optimal strategy where the leader contributes all nodes that are not covered to the common pool.

For ease of notation, 
we reduce the directed compatibility graph to an undirected graph $G = (V,E)$, where $E$ consists of edges $\{u,v\}$ for which the arc set of the directed counterpart contains both $(u,v)$ and $(v,u)$.

\begin{lemma}\label{lemma:weak_dominance}
Let $G = (V = L \cup F, E)$ be an undirected graph, $S \subseteq L$ a strategy of the leader and $u \in S$ a node. Then:
\[w^L(G[V \setminus S]) \le w^L(G[(V \setminus S) \cup \{u\}]).\]
\end{lemma}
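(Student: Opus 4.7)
The plan is to start from a maximum matching $M^*$ of $G[(V \setminus S) \cup \{u\}]$ that attains the minimum leader coverage, write $\ell^* := w^L(G[(V \setminus S) \cup \{u\}])$ for that value, and explicitly exhibit a maximum matching $N$ of $G[V \setminus S]$ satisfying $\ell(N) \leq \ell^*$, where $\ell(\cdot)$ denotes the number of leader nodes covered. Since any such $N$ obeys $w^L(G[V \setminus S]) \leq \ell(N)$ by definition, the desired inequality follows at once.

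The construction of $N$ splits on whether $u$ is matched in $M^*$. If $u$ is unmatched, then $M^*$ lies entirely in $G[V \setminus S]$, and since it is maximum in the larger graph it must also be maximum in the smaller one, so I simply take $N := M^*$. If $u$ is matched to some $v \in V \setminus S$, define $M' := M^* \setminus \{\{u,v\}\}$, a matching of $G[V \setminus S]$ with $|M'| = |M^*| - 1$ and leader count $\ell(M') \leq \ell^* - 1$ (losing $u$, and also $v$ if $v \in L$). When $M'$ is already maximum in $G[V \setminus S]$, I take $N := M'$ and obtain $\ell(N) \leq \ell^* - 1$.

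The only remaining situation is that $M'$ is strictly sub-maximum in $G[V \setminus S]$, which forces $|M^*| = w(G[V \setminus S])$. By Berge's theorem an $M'$-augmenting path $P$ exists in $G[V \setminus S]$, and I set $N := M' \triangle P$, obtaining a maximum matching with $\ell(N) = \ell(M') + \ell(\mathrm{endpoints}(P))$. If $v \in L$, then $\ell(M') \leq \ell^* - 2$ and the two endpoints of $P$ can contribute at most $2$, giving $\ell(N) \leq \ell^*$ for any choice of $P$. The delicate sub-case is $v \in F$, where the argument is: $v$ must be an endpoint of $P$, for otherwise $v$ would remain unmatched in $N$, and then $N \cup \{\{u,v\}\}$ would be a matching of $G[(V \setminus S) \cup \{u\}]$ of size $|M^*| + 1$, contradicting the maximality of $M^*$. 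Because a follower endpoint $v$ contributes $0$ to the endpoint leader count and the other endpoint contributes at most $1$, this yields $\ell(N) \leq \ell(M') + 1 = \ell^*$.

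The main obstacle is precisely this forcing argument in the $v \in F$ subcase: one has to exclude the possibility that augmenting inside $G[V \setminus S]$ introduces two fresh leader endpoints. Once one recognises that Berge's theorem is forced to consume the unmatched follower vertex $v$ as an endpoint of the augmenting path, on pain of violating the maximality of $M^*$, the leader-node bookkeeping closes out uniformly across all cases.
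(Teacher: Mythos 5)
Your proof is correct and follows essentially the same route as the paper's: delete the edge $\{u,v\}$ from a minimizing maximum matching of $G[(V\setminus S)\cup\{u\}]$, invoke Berge's theorem, observe that any augmenting path in $G[V\setminus S]$ must have $v$ as an endpoint (else re-adding $\{u,v\}$ contradicts maximality), and count leader endpoints. The only difference is presentational — you construct the witness matching directly where the paper argues by contradiction, and you split cases on whether $u$ is matched and whether $M'$ is maximum rather than on whether $w(G[(V\setminus S)\cup\{u\}])$ exceeds $w(G[V\setminus S])$ — but the key forcing argument is identical.
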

Due to space restrictions, we include the proof of this lemma in the appendix. \\

Lemma~\ref{lemma:weak_dominance} shows that whenever strategy $S \subseteq L$ is chosen and $u \in S$ is unmatched with respect to a maximum matching on $G[S]$, there is no incentive for the leader to hide node $u$ from the follower. Therefore, the leader can restrict itself to strategies $S \subseteq L$ for which $G[S]$ allows a perfect matching.

Furthermore, we notice that in contrast to the setting of the $N$-KEG problem in~\cite{carvalho2019game} where the independent agent is not allowed to use edges between nodes of the same player (internal edges) the Stackelberg KEP game does not have this restriction. Once again, we claim that for any strategy $S' \subseteq L$ for which the follower will choose an internal edge $\{u,v\} \subseteq L$ in the second phase of the Stackelberg KEP game, there exists a weakly dominating strategy $S \subseteq L$ for which the follower will not pick internal leader edges on the maximum size matching on $G[V \setminus S]$.

\begin{lemma}\label{lemma:no_internal_edges}
Let $G = (V = L  \cup  F, E)$ be an undirected graph. There exists an optimal strategy $S \subseteq L$ such that the follower chooses a maximum size matching on $G[V \setminus S]$ with no internal leader edges.
\end{lemma}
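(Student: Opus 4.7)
My approach is a contradiction argument using a carefully chosen optimal strategy. Among all optimal strategies for the leader, I would select one $S^* \subseteq L$ that also maximizes $|S^*|$; this exists since $L$ is finite. Let $M_{\mathrm{int}}^*$ be a maximum matching on $G[S^*]$ and set $\ell^* = w^L(G[V\setminus S^*])$, so that the worst-case leader coverage under $S^*$ equals $2|M_{\mathrm{int}}^*| + \ell^*$. Let $M^*$ denote any worst-case maximum matching on $G[V\setminus S^*]$ (attaining the minimum in the definition of $w^L$). The goal is to show that $M^*$ contains no internal leader edge; since the follower is assumed to play adversarially, exhibiting such a matching suffices to prove the lemma.

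For the contradiction, I would assume $\{u,v\} \in M^*$ is an internal leader edge (so $u,v \in L\setminus S^*$) and define $S' := S^* \cup \{u,v\}$. Internally, $M_{\mathrm{int}}^* \cup \{u,v\}$ is a matching on $G[S']$ (because $M_{\mathrm{int}}^*$ avoids $u$ and $v$), so the internal leader coverage under $S'$ is at least $2|M_{\mathrm{int}}^*| + 2$. Externally, one shows that the maximum matching size on $G[V\setminus S']$ equals $|M^*|-1$: the matching $M^*\setminus\{u,v\}$ supplies the lower bound, and a matching on $G[V\setminus S']$ of size $|M^*|$ would, together with $\{u,v\}$, produce a matching of size $|M^*|+1$ on $G[V\setminus S^*]$, contradicting the maximality of $M^*$. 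Consequently, every maximum matching $M'$ on $G[V\setminus S']$ extends via $M' \cup \{u,v\}$ to a maximum matching on $G[V\setminus S^*]$; as this extension covers exactly two additional leader nodes (namely $u$ and $v$), the minimum in $w^L$ yields $w^L(G[V\setminus S']) \ge \ell^* - 2$. Combining, the worst-case leader coverage under $S'$ is at least $2|M_{\mathrm{int}}^*| + 2 + (\ell^* - 2) = 2|M_{\mathrm{int}}^*| + \ell^*$, matching the coverage under $S^*$. Hence $S'$ is also optimal, yet $|S'| = |S^*| + 2 > |S^*|$, contradicting the choice of $S^*$.

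The main obstacle is the tight external bound $w^L(G[V\setminus S']) \ge \ell^* - 2$. Its delicacy lies in verifying that the extension map $M' \mapsto M' \cup \{u,v\}$ lands inside the set of maximum matchings on $G[V\setminus S^*]$, and not merely inside the set of all matchings; this is why one must first pin down that removing the endpoints of the edge $\{u,v\}$ decreases the maximum matching size by exactly one. Everything else is bookkeeping, and Lemma~\ref{lemma:weak_dominance} is not needed explicitly in this argument, although one could alternatively invoke it at the outset to restrict attention to strategies for which $G[S^*]$ admits a perfect matching.
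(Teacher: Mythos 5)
Your proof is correct, and it is built on the same core move as the paper's: transferring the endpoints of an internal leader edge from the follower's pool into the leader's internal set. The execution, however, differs in a way worth noting. The paper takes an arbitrary optimal strategy $S'$, a worst-case maximum matching $M$ on $G[V\setminus S']$, collects \emph{all} internal leader edges $N = M \cap E(G[L])$ at once, and asserts that $S = S' \cup V(N)$ has the same guaranteed value and that the follower now picks no internal leader edges; both assertions are left unverified, and the second is not immediate, since after deleting $V(N)$ the follower's new worst-case maximum matching could use \emph{other} internal leader edges not present in $M$. Your extremal formulation (take an optimal $S^*$ maximizing $|S^*|$) cleanly closes that gap: it amounts to iterating the exchange until a fixpoint, with termination guaranteed by the strictly increasing cardinality, and your matching-size argument ($w(G[V\setminus S'])$ drops by exactly one edge, so $w^L$ drops by at most $2$ while the internal coverage gains exactly $2$) supplies the value-preservation that the paper only states. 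A small bonus of your argument: nothing in steps 5--8 uses that $M^*$ attains the minimum in $w^L$, so you in fact show that for the cardinality-maximal optimal $S^*$ \emph{no} maximum matching on $G[V\setminus S^*]$ contains an internal leader edge, which is the form actually needed when the follower's tie-breaking is adversarial.
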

Due to space restrictions, we include the proof of this lemma in the appendix.\\

Notice that whenever we impose the hard constraint that the follower is not allowed to use internal leader edges, the minimum number of covered leader nodes in a maximum size follower matching can never increase. Therefore, it also follows that the strategic options of the follower are really equivalent to those of an independent agent in a suitably constructed $N$-KEG game. Together with this observation, we now have all the necessary tools to derive the complexity of the Stackelberg KEP game restricted to pairwise kidney exchanges only.
\begin{theorem}
The Stackelberg KEP game is polynomially solvable if the maximum cycle length $K = 2$.
\end{theorem}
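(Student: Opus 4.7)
The plan is to reduce the Stackelberg KEP game with $K = 2$ to the best-reaction problem of a single player in a two-player instance of $N$-KEG, and then invoke the polynomial-time algorithm of \cite{carvalho2017} and \cite{carvalho2019game}. The two preceding lemmas are exactly the bridge needed to match the Stackelberg setting to the $N$-KEG setting, where the feasible strategies of a player are restricted to node sets admitting a perfect internal matching and the independent agent is forbidden from using internal edges.

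Concretely, I would proceed as follows. First, by Lemma~\ref{lemma:weak_dominance}, we may restrict attention to strategies $S \subseteq L$ such that $G[S]$ admits a perfect matching: any leader node in $S$ not covered by its internal matching can be moved into the follower's pool without decreasing the number of covered leader nodes. Second, by Lemma~\ref{lemma:no_internal_edges}, among optimal leader strategies there is one in which the follower's maximum matching on $G[V \setminus S]$ contains no internal leader edges. These two restrictions transform the leader's decision into exactly the problem of choosing a feasible $N$-KEG strategy while the follower, now constrained to inter-player edges only, plays the role of the independent agent.

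Next, I would interpret the reduced situation as a two-player $N$-KEG instance. Define player $1$ to be the leader with node set $L$ and player $2$ to have node set $F$. Fix player $2$'s strategy to $S^2 = \emptyset$, meaning that player $2$ contributes all of $F$ to the common pool. The leader's task of choosing $S \subseteq L$ to maximize $w(G[S]) + w^{L}(G[V \setminus S])$ under the restrictions above is then identical to computing player $1$'s best reaction to player $2$'s fixed empty strategy in this $N$-KEG instance with $K = 2$. Invoking the polynomial-time best-reaction algorithm underlying the equilibrium computation of \cite{carvalho2017} and \cite{carvalho2019game} then yields an optimal $S$ in polynomial time.

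The main obstacle I anticipate is making the correspondence between the two problem formulations airtight: one must verify that the objective functions agree on the restricted class of strategies (which is where Lemmas~\ref{lemma:weak_dominance} and \ref{lemma:no_internal_edges} do the heavy lifting) and that the best-reaction subroutine from \cite{carvalho2019game} is invoked with exactly the input it expects (the graph $G$, the partition $L \cup F$, and the fixed opponent strategy $S^2 = \emptyset$). Once that correspondence is set up, polynomial solvability of the Stackelberg KEP game at $K = 2$ follows directly. \qed
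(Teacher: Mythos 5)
Your proposal is correct and follows essentially the same route as the paper: both lemmas are used to restrict attention to strategies $S$ with a perfect internal matching and to follower matchings without internal leader edges, after which the leader's problem is cast as a best-reaction computation in an $N$-KEG instance with $K=2$ and solved via \cite{carvalho2017} and \cite{carvalho2019game}. The one difference worth noting is that the paper splits $F$ into $|F|$ singleton players, each of whom has only one feasible strategy, so that player one's best reaction together with these forced strategies automatically constitutes a Nash equilibrium, which is precisely the object Carvalho et al.\ prove computable in polynomial time; your two-player version with a fixed $S^2 = \emptyset$ instead presupposes a standalone polynomial-time best-reaction oracle against an arbitrary fixed opponent strategy, a point you should either verify from \cite{carvalho2019game} directly or sidestep via the singleton-player encoding.
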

Due to space restrictions, we include the proof of this theorem in the appendix.

\section{Conclusion}
Collaboration between hospitals and countries (referred to as agents) has the potential to improve the number of kidney transplants. However, goals of individual agents may not lead social optima. Research on this topic has focused on strategy-proof mechanisms and the loss of transplants associated with individual rational strategies. In specific cases, with limited cycle length, it can be shown that solutions which are both socially optimal and individually rational exist and can be identified efficiently.

In this paper, we show that for an individual agent, the problem of maximizing its own transplants can computationally be very complex, even given perfect knowledge of other agent's actions and the allocation mechanism used in the common pool. This result has the potential to  weaken the need for deploying mechanisms that are strategy-proof, as individual agents will find it challenging to identify their own optimal strategy. 

\bibliographystyle{plainnat}
\bibliography{KEP}

\section*{Appendix - Additional Proofs}
\subsection*{Remainder of proof of Lemma 1}
Now we consider variable gadgets corresponding to variables $y \in Y$. For ease of exposition, we assume that for such a gadget in the current solution, at least as many $\beta_{y,t,i}$ as $\beta_{y,f,i}$ ($i=1,2$) are covered by cycles of Type 2. This is without loss of generality. We consider the four $\beta$ nodes of the gadget, and make a case distinction based on whether these $\beta$ nodes are covered by cycles of Type 2.
\begin{enumerate}
\item[Case 1:] All four $\beta$ nodes of the gadget are covered by cycles of Type 2. Since, by assumption, the gadget is not cheating, one easily verifies that the size of the packing improves when the solution is changed such that the gadget becomes a cheating gadget.
\item[Case 2:] Three of the four $\beta$ nodes of the gadget are covered by cycles of Type 2. Hence, a single $\beta$ node is not covered by a cycle of Type 2, say $\beta_{y,f,2}$. It also follows that $\alpha_{y,1}$ is not covered. If $\phi_{y,1}$ is not covered, we remove the cycle of Type 2 covering $\beta_{y,f,1}$, and add the cycle of Type 1 containing nodes $\beta_{y,f,1}, \phi_{y,1}$ and $\alpha_{y,1}$, thereby increasing the size of the packing. If $\phi_{y,1}$ is covered, it is in a cycle with $\phi_{y,2}$, meaning that $\alpha_{y,2}$ is not covered. Then, we remove the cycle of Type 1 covering $\phi_{y,1}$ and $\phi_{y,2}$, as well as the cycle of Type 2 covering $\beta_{y,f,1}$, and we add the two cycles of Type 1 containing nodes $\beta_{y,f,i}, \phi_{y,i}$ and $\alpha_{y,i}$, $i=1,2$, again increasing the size of the packing.
\item[Case 3:] Two of the four $\beta$ nodes are covered by cycles of Type 2. If these two nodes are $\beta_{y,t,i}$, $i=1,2$, then it is easy to see that by modifying the solution such that the gadget becomes consistent (in fact, consistent with a TRUE value) is the only possibility since all nodes of the gadget are then covered. If $\beta_{y,t,1}$ and $\beta_{y,f,1}$ are covered by cycles of Type 2, it follows that at most 5 nodes of the gadget can be covered by cycles of Type 1. By removing one cycle of Type 2, this solution can be improved by such that the gadget becomes consistent. The number of nodes covered by cycles of Type 1 rises to 8 and one of the cycles of Type 2 covering $\beta_{y,t,1}, \beta_{y,f,1}$ can still be used. If $\beta_{y,t,1}$ and $\beta_{y,f,2}$ are covered by cycles of Type 2, and since, by assumption, the gadget is not zigzag, at most 5 nodes of the gadget are covered by cycles of Type 1. By switching to a zigzag packing, 6 nodes are covered by cycles of Type 1, and no existing cycles of Type 2 are impacted. The size of the packing increases.
\item[Case 4:] At most one $\beta$ node is covered by a cycle of Type 2. Since, by assumption, the gadget is not consistent, we can improve the solution by modifying this gadget to be consistent.
\end{enumerate}

\subsection*{Proof of Lemma \ref{lemma:weak_dominance}}
\begin{proof}
Clearly, $w(G[V \setminus S]) \le w(G[(V \setminus S) \cup \{u\}]) \le w(G[V \setminus S]) + 1$, we can restrict ourselves to a case distinction with two cases:
\begin{itemize}
    \item [Case 1:] $ w(G[(V \setminus S) \cup \{u\})]) = w(G[V \setminus S]) + 1$. In this case, any maximum matching of $G[(V \setminus S) \cup \{u\}]$ matches $u$. Take an arbitrary maximum matching $M_u$ of $G[(V \setminus S) \cup \{u\}]$, let $e = \{u, v\} \in M_u$ for some $v \in V \setminus S$ be the edge that matches $u$. The matching $M_u \setminus e$ is a maximum matching of $G[V \setminus S]$ with fewer leader nodes (as $u$ is a leader node). Thus, any maximum matching of $G[(V \setminus S) \cup \{u\}]$ corresponds to a maximum matching of $G[V \setminus S]$ covering fewer leader nodes. In particular, this implies $w^L(G[V \setminus S]) < w^L(G[(V \setminus S) \cup \{u\}])$, meaning that we actually increase our objective value by revealing $u$.
    \item [Case 2:] $w(G[(V \setminus S) \cup \{u\}]) = w(G[V \setminus S])$. Clearly, we have that $w^L(G[V \setminus S]) \ge w^L(G[(V \setminus S) \cup \{u\}])$: any maximum matching of $G[V \setminus S]$ is also a feasible maximum matching for $G[(V \setminus S) \cup \{u\}]$. We will now show by contradiction that also $w^L(G[V \setminus S]) \le w^L(G[(V \setminus S) \cup \{u\}]) $ must hold.\\
    \\
    Suppose that $w^L(G[(V \setminus S) \cup \{u\}]) < w^L(G[V \setminus S])$. In that case, let us consider a maximum matching $M_u$ of $G[(V \setminus S) \cup \{u\}]$ covering $w^L(G[(V \setminus S) \cup \{u\}])$ leader nodes. Then $u$ is matched in $M_u$, as otherwise $M_u$ would be a maximum matching on $G[V \setminus S]$ with fewer than $w^L(G[V \setminus S])$ covered leader nodes. Let $e = \{u, v\} \in M$ for some $v \in V \setminus S$. The matching $M' = M_u \setminus e$ is a $(w(G[V \setminus S])-1)$-cardinality matching in $G[V \setminus S]$, thus non-maximum. This implies that there exists an $M'$-augmenting path in $G[V \setminus S]$ due to~\cite{Berge1957}.\\
    \\
    We claim that any $M'$-augmenting path starts in $v$; if not, let $P$ be an $M'$-augmenting path in $G[V \setminus S]$ not starting in $v$. Then: $M' \oplus P$ is a $w(G[V \setminus S])$-cardinality matching in $G$, where $A \oplus B := \left(A\setminus B\right) \cup \left(B \setminus A\right)$ denotes the symmetric difference of $A$ and $B$. Then, the nodes $u$ and $v$ are still both unmatched, as we assumed $P$ does not contain $v$. This implies that $\left(M' \oplus P\right) \cup \left\{\{u,v\}\right\}$ is a $(w(G[V \setminus S]) + 1)$-cardinality matching in $G[V \setminus (S \cup \{u\})]$, a contradiction.\\
    \\
    Therefore, any $M'$-augmenting path $P = \{v = v_0, v_1, \ldots, v_k = w\}$ in $G[V \setminus S]$ must start in $v$. The set of matched nodes in the $w(G[V \setminus S])$-cardinality matching $M' \oplus P$ is almost the same as the set of matched nodes in $M_u$, except $u$ in $M_u$ is exchanged for node $w$ (the endnode of $P$ unequal to $v$). This means $M' \oplus P$ is a maximum matching on $G[V \setminus S]$ and covers either $w^L(G[(V \setminus S) \cup \{u\}])$ (when $w \in L$) or $w^L(G[(V \setminus S) \cup \{u\}])-1$ (when $w \in F$) leader nodes, a contradiction.
\end{itemize}
Thus, in both cases, we obtain $w^L(G[V \setminus S]) \le w^L(G[(V \setminus S) \cup \{u\}])$, which finishes the proof.\qed
\end{proof}

\subsection*{Proof of Lemma 5}
\begin{proof}
Let $S' \subseteq L$ be an arbitrary feasible leader strategy. Let $M$ be a maximum size matching on $G[V \setminus S']$ covering exactly $w^L(G[V \setminus S'])$ leader nodes. Let $N = M \cap E(G[L])$ be the submatching of $M$ consisting of the internal leader edges. Consider now the feasible strategy $S = S'  \cup  V(N)$. Strategy $S$ has the same guaranteed objective value as $S'$, but now the follower does not pick any internal leader edges anymore. This shows that there exists an optimal strategy for the leader in which the follower will not pick any internal leader edges. \qed
\end{proof}

\subsection*{Proof of Theorem 2}
\begin{proof}
Consider a Stackelberg KEP game on the graph $G =(V = L \cup F, A)$ and maximum cycle length $K=2$. We now construct a $(|F| + 1)$-KEG with maximum cycle length $K=2$. Player one controls the node set $L$, while the remaining $|F|$ players each control a unique node from the set $F$. The role of the follower is to compute a maximum matching on the graph induced by all nodes contributed to the common pool, such that a minimum number of player one's nodes are included. \\

In the $(|F| + 1)$-KEG, each of the players with only node has only one feasible strategy, not to match any nodes internally. These players have no incentive to hide its unique node from the common pool, as then the player is guaranteed to have no matched donor-recipient pairs. Meanwhile, player one is faced with the problem of finding a matching $M$ maximizing the number of nodes in $L$ which are matched either internally or by the follower. Note that the number of nodes matched internally is $w(G[M])$. Due to the choice of the matching algorithm of the independent agent, and the fact that every player except for player one must contribute all nodes to the common pool, the number of player one's nodes matched externally is also $w^L(G[V \setminus M])$. The objective of player one is thus the same as the objective of the leader in the Stackelberg KEP game. Lemma \ref{lemma:weak_dominance} shows that given this objective function, there exists an optimal strategy $S$ for the leader in the Stackelberg KEP game for which there exists a perfect matching on $G[S]$. From Lemma~\ref{lemma:no_internal_edges} it follows that the follower in the Stackelberg KEP game operates on the same graph as the follower in the $N$-KEG game, namely the graph induced by the common pool and with only noninternal edges. Thus, the optimal strategy $M$ for player one is also an optimal strategy for the leader in the Stackelberg KEP game. \cite{carvalho2019game} prove that computing a Nash equilibrium in the $N$-KEG can be done in polynomial time for any deterministic algorithm for the independent agent. Notice that computing the optimal strategy of player one reduces to finding an optimal reaction to the strategies of the $|F|$ other players, each of which only have one feasible strategy, namely to reveal their unique node. Thus, the strategy for the leader in the Stackelberg game with pairwise exchanges can be computed in polynomial time, as the best reaction of player one together with the fixed strategies of the other players forms a Nash equilibrium, which clearly implies that the decision problem can be answered in polynomial time as well. \qed \end{proof}
\newpage
\subsubsection*{Proof of $\Sigma_2^p$-completeness of Adversarial (2,2)-SAT}

We show that Adversarial (2,2)-SAT is $\Sigma_2^p$-complete using the following theorem by \cite{johannes}.

\begin{theorem} [\cite{johannes}]
Let (ADV) be an adversarial problem. Let (NON-ADV) denote the corresponding non-adversarial problem. We assume that (NON-ADV) is in NP. Let $f$ be a polynomial transformation from 3-SAT to (NON-ADV) that satisfies the following property. If $U$ is the ground set of an instance $I_{(3SAT)}$ of 3-SAT and $Z$ is the ground set of the instance $f(I_{(3SAT)})$ of (NON-ADV), then there is a subset $Z'$ of $Z$ and a bijective function $g : U \rightarrow Z'$ such that:
\begin{enumerate}
    \item If $S^U$ is a satisfying solution of $I_{(3SAT)}$, then the 0-1 assignment $S_{Z'}$ to the variables in $Z'$ with $S_{Z'}(z)$ = $S^{U}(g^{-1}(z))$ for all $z \in Z'$ can be extended to a 0-1 assignment $S$ of all variables in $Z$ such that $S_Z$ is a satisfying solution.
    \item If $S_Z$ is a satisfying solution of $f(I_{(3SAT)})$, then the 0-1 assignment $S^U$ with $S^U(x) = S_Z(g(x))$ for all $x \in U$ represents a satisfying solution to $I_{(3SAT)}$.
\end{enumerate}
Then (ADV) is $\Sigma_2^p$-complete.
\end{theorem}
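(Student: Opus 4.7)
The plan is to establish both directions of $\Sigma_2^p$-completeness. Membership in $\Sigma_2^p$ is immediate: an instance of (ADV) is a YES-instance precisely when there exists an assignment to the leader's variables such that for every assignment to the follower's variables the resulting (NON-ADV) instance is a NO-instance; since (NON-ADV) lies in NP, the inner check is in coNP, so the whole problem has the required $\exists\forall$ form and lies in $\Sigma_2^p$.

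For hardness, I reduce from Adversarial 3-SAT, a canonical $\Sigma_2^p$-complete problem whose variable set $U$ is partitioned into an existentially quantified part $X$ and a universally quantified part $Y$, asking whether there exists an assignment to $X$ such that no assignment to $Y$ satisfies the 3-CNF formula $E$. Given such an instance $I_{(3SAT)}$, I apply the hypothesized transformation $f$ to obtain an instance $f(I_{(3SAT)})$ of (NON-ADV) on ground set $Z$, together with the bijection $g \colon U \to Z' \subseteq Z$. I then form an instance of (ADV) by partitioning $Z$ into the leader's ground set $Z_L := g(X)$ and the follower's ground set $Z_F := Z \setminus Z_L = g(Y) \cup (Z \setminus Z')$. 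The reduction is polynomial because $f$ is, and the partition of $Z$ is read off from $g$.

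It remains to verify that $I_{(3SAT)}$ is a YES-instance of Adversarial 3-SAT if and only if the constructed instance is a YES-instance of (ADV). For the forward direction, given a witnessing assignment $S^U$ for $X$, I define the leader strategy by $\sigma(z) := S^U(g^{-1}(z))$ for $z \in Z_L$; if some follower response $\tau$ on $Z_F$ turned the (NON-ADV) instance into a YES, then $\sigma \cup \tau$ would be a satisfying solution of $f(I_{(3SAT)})$, and property~(2) of $f$ would yield a satisfying assignment of $I_{(3SAT)}$ whose restriction to $X$ equals $S^U$, contradicting the assumption that no $Y$-assignment satisfies $E$ given $S^U$. For the reverse direction, given a leader strategy $\sigma$ witnessing (ADV), I set $S^U(x) := \sigma(g(x))$ for $x \in X$; if some $Y$-assignment $T$ satisfied $E$ together with $S^U$, then property~(1) would extend the combined assignment on $Z'$ to a satisfying assignment $S_Z$ on all of $Z$, and the restriction of $S_Z$ to $Z_F$ would be a follower response defeating $\sigma$, again a contradiction.

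The main delicate point, and the one I expect to require the most care in writing out, is that in the reverse direction the extended assignment $S_Z$ furnished by property~(1) must actually agree with $\sigma$ on the coordinates of $Z_L$, so that combining $\sigma$ with $S_Z|_{Z_F}$ reproduces exactly $S_Z$ and thereby triggers a satisfying solution of $f(I_{(3SAT)})$; this agreement holds by construction since $S_Z(g(x)) = S^U(x) = \sigma(g(x))$ for every $x \in X$. A secondary subtlety, which forces the choice of partition, is that the auxiliary variables $Z \setminus Z'$ introduced by $f$ must be assigned to the follower rather than the leader: property~(1) only guarantees that suitable extensions to $Z \setminus Z'$ exist, and such extensions can be invoked for us exactly when these variables are under the $\forall$-player's control.
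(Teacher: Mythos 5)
Your proof is correct. Be aware, however, that the paper contains no proof of this statement to compare against: the theorem is imported verbatim from \cite{johannes} as a black box, and the paper's own contribution is only the construction of a transformation $f$ from 3-SAT to (2,2)-SAT verifying hypotheses (1) and (2). Your argument --- membership via the $\exists\forall$ characterization with the inner check in coNP, and hardness by reducing adversarial 3-SAT, keeping $Z_L = g(X)$ for the existential player and handing $Z_F = g(Y) \cup (Z \setminus Z')$ to the universal player, with properties (1) and (2) transferring the winning conditions in the two directions --- is precisely the standard proof of Johannes's theorem, and you correctly isolate the two points on which it hinges: the extension furnished by property (1) agrees with the leader's play because $Z_L \subseteq Z'$ and the extension fixes the values on $Z'$, and the auxiliary variables $Z \setminus Z'$ must be controlled by the follower, since property (1) only guarantees the \emph{existence} of a suitable completion, which is exactly what a $\forall$-quantified player provides.
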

Given a instance of 3-SAT, let us now construct a (2,2)-SAT instance. Let $u_i \in U$ be a variable in the 3-SAT instance with $k$ occurrences in the clauses. We then construct variables $z_i^1, \ldots, z_i^k \in Z$. We furthermore construct the clauses $\{(z_i^1, \neg z_i^2), \ldots, (z_i^{k-1}, \neg z_i^k),(z_i^k, \neg z_i^1)\}$. We refer to these as the {\it consistency clauses}. Note immediately that for every satisfying truth assignment in the (2,2)-SAT instance, all variables $z_i^1,\ldots,z_i^k$ must have the same truth value.
Furthermore, for each clause $c \in C$ in the 3-SAT instance, we construct a clause $c'$ in the (2,2)-SAT instance. Let the occurrence of $u_i$ in $c$ be the $k$-th occurrence (negated and unnegated combined) of $u_i$, then $z_i^k \in c'$ if $u_i$ appears unnegated or $\neg z_i^k \in c'$ if $u_i$ appears negated.
Note that each variable $z_i^j \in Z$ now occurs exactly three times; once negated and once unnegated in the clauses $(z_i^j, \neg z_i^{j+1})$ and once negated OR unnegated in the other clauses. Given that we are reducing to (2,2)-SAT, each variable must occur exactly twice unnegated and twice negated. We construct one more clause, the {\it rest clause}, which contains all these other occurrences. As long as there is at least one variable in the 3-SAT instance that occurs at least once unnegated and at least once negated, this additional clause is automatically satisfied if the consistency clauses are satisfied.

Now, let $Z' = \{z_1^1, z_2^1, \ldots z_n^1\}$ and let $g$ be such that $z_i^1$ has the same truth value as $u_i$. The remainder follows easily. We extend the truth assignment to $Z'$ to $Z$ by setting the truth assignment of $z_i^j$ equal to that of $z_i^1$ for all $i$ and $j$. The consistency clauses and the rest clause are then satisfied (given a non-trivial instance of 3-SAT). Given the identical truth values for all $z_i^j$ given a fixed $i$, and because these truth values are equal to that of $u_i$, the construction of the remaining clauses also ensures they are satisfied. We thus have a satisfying solution for the (2,2)-SAT instance. The other direction can be argued nearly identically.
\end{document}